\theoremstyle{definition}
\newtheorem{theorem}{Theorem}
\newtheorem{corollary}{Corollary}[theorem]
\DeclarePairedDelimiterX{\inp}[2]{\langle}{\rangle}{#1, #2}
\newcommand\id{\leavevmode\hbox{\small1\kern-3.3pt\normalsize1}}
\newcommand{\ablue}[1]{\textcolor{black}{#1}}
\begin{document}

\begin{CJK*}{UTF8}{gbsn}
\title{Causal order as a resource for quantum communication}
\author{Ding Jia (贾丁)}
\email{ding.jia@uwaterloo.ca}
\affiliation{Department of Applied Mathematics, University of Waterloo, Waterloo, Ontario, N2L 3G1, Canada}
\affiliation{Perimeter Institute for Theoretical Physics, Waterloo, Ontario, N2L 2Y5, Canada}
\author{Fabio Costa}
\affiliation{Centre for Engineered Quantum Systems, School of Mathematics and Physics, University of Queensland, QLD 4072 Australia}

\begin{abstract}
In theories of communication, it is usually presumed that the involved parties perform actions in a fixed causal order. However, practical and fundamental reasons can induce uncertainties in the causal order. Here we show that a maximal uncertainty in the causal order forbids asymptotic quantum communication, while still enabling the noisy transfer of classical information. Therefore causal order, like shared entanglement, is an additional resource for communication. The result is formulated within an asymptotic setting for processes with no fixed causal order, which sets a basis for a quantum information theory in general quantum causal structures.
\end{abstract}

\maketitle
\end{CJK*}

\section{Introduction}


One of the basic questions in quantum information theory is to characterise the resources necessary for the reliable transmission of quantum information \cite{wilde2017quantum}: a sender encodes a quantum state in a system and a receiver has to retrieve it, without a prior knowledge of what the state might be \cite{shor1995scheme}. A typical protocol can involve the physical transfer of the system, allowing it to undergo some time evolution possibly including noise. The essential resource is then how well the evolution preserves the initial state. A different method is teleportation \cite{bennett1993teleporting}, where the resources are entanglement and classical communication. The quality of the communication resource is typically measured as the rate of reliable transmitted qubits per use of the resource, in the limit of many independent uses.

Typical communication protocols presume that the involved parties' actions take place in a fixed causal order, with the sender's always preceding the receiver's. More general situations are possible: The parties might both act on a quantum particle that is exchanged between the two, but without knowing to whom the particle goes first. For multiple runs of the protocol, the particle could go one direction or the other randomly, according to some probability. It is natural to ask whether communication is at all possible without a fixed causal order and if causal order should itself be considered as a resource for communication tasks. This can be relevant in scenarios of distributed quantum computation, where separated units have to communicate in order to perform a joint operation, but unknown delays in the network might produce uncertainty in the order in which the units are queried \cite{lamporttime1978}. It is also relevant for foundational questions, such as the informational properties of processes in scenarios where quantum-gravity fluctuations generate uncertainty in causal relations \cite{butterfield2001spacetime, hardy2005probability, hardy2007towards}.

Here we find that, in the asymptotic limit, a communication protocol where the order  between two parties is completely unknown allows the transfer of classical information in either direction (although with limited efficiency), but \emph{not of quantum information}. In particular, we prove that, when the causal order between two parties is completely uncertain (with equal probability for both orders) 
the asymptotic quantum communication capacity vanishes in both directions. 



\section{Single-shot Process matrices}

We consider a general communication protocol where, in an individual run, each party receives a quantum system, which might contain information sent by another party or shared correlations, and then sends away a system in which they encoded the desired information. Each party can perform an arbitrary local operation on their system, namely they can let it interact with a local ancilla in some controlled way.  Crucially, the parties have no access to any background causal structure, thus the time of their operations is not set in advance and it might vary probabilistically for different runs of the protocol.

\ablue{Situations of this type are conveniently modelled within the process matrix framework \cite{oreshkov2012quantum,araujo2015witnessing,oreshkov2016causal}, which generalises standard states and channels to scenarios with no background causal structure.} 
We review the framework as formulated in \cite{araujo2017purification} through the notion of higher order maps \cite{perinotti2017causal, bisio2018axiomatic}, which turns out to be convenient for the current study of communication protocols. 

\begin{figure}
    \centering
    \includegraphics[width=.5\textwidth]{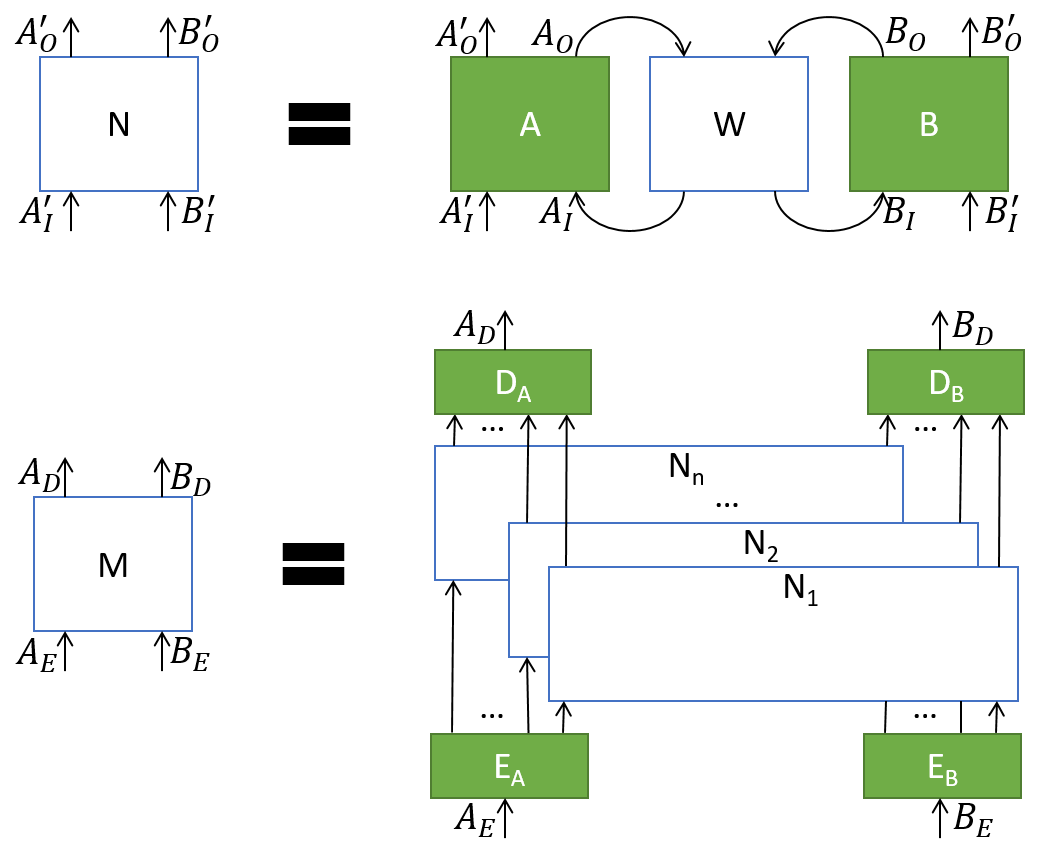}
    \caption{Top: A process $W$ composed with local channels $A$ and $B$ to form a shared channel $N$. Bottom: $n$ such shared channels $N_1,N_2,\cdots,N_n$ composed with local encodings $E_A, E_B$ and decodings $D_A, D_B$ to form a shared channel $M$.}
    \label{fig:bp}
\end{figure}

Bipartite processes defined as higher-order maps are illustrated in the top of \cref{fig:bp}. Let $A$ be a completely positive trace preserving (CPTP) map, i.e., a quantum channel, with the input systems $A_I,A_I'$ and output systems $A_O,A_O'$, \ablue{where $A_I$, $A_O$ represent the system that Alice receives and respectively send back to the process, while $A_I'$, $A_O'$ represent her local ancilla before and after the interaction}. Similarly, let $B$ be a quantum channel with inputs $B_I,B_I'$ and outputs $B_O,B_O'$. \ablue{A processes $W$ is defined as a linear map acting on $A_I,A_O,B_I,B_O$, with the requirement that any pair of such channels $A$ and $B$ is transformed into a new channel $N=W(A,B)$, with inputs $A_I',B_I'$ and outputs $A_O',B_O'$.}

Quantum channels and processes can be represented as matrices through the Choi isomorphism~\cite{choi1975completely}, which takes a completely positive map $M:\mathcal{L}(\mathcal{H}^{a_1})\rightarrow \mathcal{L}(\mathcal{H}^{a_2})$ to its positive semidefinite ``Choi matrix''
\begin{align}\label{eq:choi}
\sum_{i,j=1}^{d_{a_1}} M(\ketbra{i}{j})\otimes \ketbra{i}{j}\in \mathcal{L}(\mathcal{H}^{a_2}\otimes\mathcal{H}^{a_1}),
\end{align}
where $\{\ket{i}\}$ is an orthonormal basis of $\mathcal{H}^{a_1}$. In the following, we refer to the Choi matrix of a map $M$ using the same letter. In this representation, the defining property of a process $W$, that it  maps channels to channels, is captured by the following \cite{araujo2015witnessing}: 
\begin{align} \label{positive}
W\ge& 0, 
\\
\Tr W=& d_O,
\\
_{B_IB_O}W =& _{A_OB_IB_O}W,
\\ \label{marginalB}
_{A_IA_O}W=&_{B_OA_IA_O}W,
\\ \label{noloops}
W=&_{A_O}W+_{B_O}W-_{A_OB_O}W,
\end{align}
 where $d_O$ is the product dimension of the output systems and we use the trace-and-replace notation $_X M := \frac{\id^X}{d^X}\otimes \Tr_X M$, where $d_X$ is the dimension of system $X$.
Any matrix $W$ obeying conditions \eqref{positive}--\eqref{noloops} is called a process matrix. It defines a process, whose action on channels $A$ and $B$ is represented using Choi matrices as $N(A,B)=W*A*B=W*(A\otimes B)$. Here we use the ``link product'' \cite{chiribella2009theoretical}:
\begin{align}\label{eq:lp}
M*N:=\Tr_{s}[M^{T_{s}} N],
\end{align}
where $s$ is the system that is the joint support of $M$ and $N$, and $T_s$ is the partial transpose on $s$. It is understood that the operators act as the identity outside of its original support.


It is useful to consider causally-ordered processes, i.e., processes that cannot transmit information in certain directions. We use $W^{A\prec B}$ (respectively $W^{A\succ B}$) to denote a process that cannot be used to signal from $B$ to $A$ (respectively $A$ to $B$). It holds that \cite{gutoski2007toward, chiribella2009theoretical}
\begin{align}
W^{A\prec B}=&_{B_O}W^{A\prec B},\label{eq:watb}
\\
W^{A\succ B}=&_{A_O}W^{A\succ B}.\label{eq:wbta}
\end{align}
Incidentally, the processes generalize quantum channels. For instance, one can verify that a quantum channel from $A$ to $B$ is a special case of $W^{A\prec B}$ with the systems $A_I$ and $B_O$ set to be the one-dimensional trivial system. General causally-ordered processes represent \emph{channels with memory} \cite{kretschmann2005quantum}.

Here we are interested in more general processes, $W^{AB}=pW^{A\prec B}+(1-p)W^{A\succ B}$, where Alice might come before Bob with probability $p$ and Bob before Alice with probability $1-p$. Such processes are called \emph{causally separable}, and it is known that more general situations, where the causal order is indefinite, are possible too \cite{oreshkov2012quantum, chiribella2013quantum, baumeler14, Branciard2016, abbott2016, giacomini2016indefinite, zych2017bell}. In this work, however, we are mostly concerned with definite, albeit possibly unknown, causal order.

\section{The asymptotic setting}\label{sec:as}

Measures of communication capacity are typically defined as the optimal rate of transmitted information per use of the resource, in the limit of infinite uses~\cite{shannon1948mathematical, wilde2017quantum}. Recall that, in the standard asymptotic setting for channel communication, $n\rightarrow\infty$ copies $N^{\otimes n}$ of the channel $N$ are sandwiched between a joint encoding channel $E$ and a joint decoding channel $D$. 

To generalise this notion to processes where the causal order is not fixed, we need to clarify in what ways the parties can use multiple copies of a process. Each copy $W_i$ of the process is associated with input-output spaces $A^i_I, A^i_O$, which can be accessed by letting them interact with ancillary systems ${A^i_I}', {A^i_O}'$ (and similarly for Bob). As the parties have no access to a background causal structure, they do not know in which order different copies will be instantiated. Therefore, they can communicate in no other way than through the process. In other words, $n$ uses of a bipartite process $W$ are described by the the $2n$-partite process $W^{\otimes n}$ where, for $i=1,2,\cdots, n$, each party can only apply independent channels $A_i$, $B_i$. (We can equivalently say that the process $W^{\otimes n}$ can be composed with arbitrary product channels $\bigotimes_{i=1}^n \left(A_i\otimes B_i\right)$.) We can still think that all the $A_i$ channels are controlled by a single agent, Alice, and the $B_i$ ones by Bob, who are restricted to product channels because of the unknown causal order. We will refer to Alice and Bob as ``agents", to distinguish them from ``parties", which we reserve to the individual access points to each copy of the process.

Formally, at each iteration the parties convert the process into a channel for the corresponding ancillary systems, $N_i:=W_i*(A_i\otimes B_i)$. The $N_i$'s thus obtained can then be used according to the ordinary asymptotic settings for channels, with the difference that now each agent can both receive and send information. Therefore, the channel $N^{\otimes n}$ can be preceded by encoding channels $E_A, E_B$, which prepare joint states in the spaces ${A^{\otimes n}_I}'$, ${B^{\otimes n}_I}'$, and followed by decoding channels $D_A, D_B$, which transform respectively  ${A^{\otimes n}_O}'$, ${B^{\otimes n}_O}'$ into the final output state. Note that no joint encoding on ${A^{\otimes n}_I}'\otimes {B^{\otimes n}_I}'$ should be allowed, as this could introduce additional entanglement, not modelled in the process, and thus an additional communication resource. Similarly, common decodings on  ${A^{\otimes n}_O}'\otimes {B^{\otimes n}_O}'$ are excluded, as they would allow the parties to exchange information beyond what is enabled by the process. As illustrated in the bottom of \cref{fig:bp}, the setting described generates a shared channel $M$, which can be used to communicate information.


We observe that there are \ablue{also other} 
ways in which agents could use multiple copies $W^{\otimes n}$ of a process, for example by arranging the access to the different parties in some given order or adding extra entanglement. Note, however, that there are constraints on process composition: For example, $W^{\otimes n}$ cannot be used as a bipartite process, where all the $A$'s and the $B$'s each act `simultaneously' as a collective party~\cite{jia2018tensor, guerin2019composition}. The setting introduced above is appropriate for the study of causal order as a communication resource in the asymptotic setting, as it precludes the agents from using additional communication resources, such as entanglement or causal order. A more general study can be based on the one-shot setting, which is beyond the scope of this work.



\section{The quantum communication task}\label{sec:qct}

The standard definition of the quantum communication task for channels through entanglement or subspace transmission \cite{wilde2017quantum} can be generalized to processes~\footnote{The definition through entanglement generation generalized to processes, however, does not capture ``communication''. States are special cases of processes. Entanglement can be generated from them even when they do not allow signalling at all from one party to the other.}, and their capacities agree~\footnote{This can be seen as a consequence of the theorem in the appendix of \cite{jia2018quantum2} and the fact that the capacities agree for channels \cite{barnum2000quantum}.}. Without loss of generality we present the communication task and define the quantum communication capacity for processes through entanglement transmission.

In an entanglement transmission task from Alice to Bob, Alice, in addition to sharing copies of the process $W^{AB}$ with Bob, also shares a preexisting state $\tau$ with a third party Charlie. The goal is for Alice to ``transmit'' her share of the preexisting state to Bob so that in the end Bob and Charlie share a state $\rho$ that is as close to $\tau$ as possible. In the above asymptotic setting, the protocol takes the form
\begin{align}
\rho^{CB_D}=M*\tau^{CA_E}=(N^{\otimes n}*E_A*E_B*D_B)*\tau^{CA_E},
\end{align}
where $N^{\otimes n}=\otimes_{i=1}^n N_i$ and $N_i=W_i*(A_i\otimes B_i)$. Without loss of generality the system $A_O'$ and the operation $D_A$ have been taken to be trivial, since they are not accessible to Bob and will eventually be traced out. In addition, $E_B$ is taken to be a state rather than a channel, since even if it were a channel in the beginning, a state needs to be fed into its input to turn the channel into a state by the end of the protocol.

We say there is a $(R,n,\epsilon)$ code for entanglement transmission if for $R=(1/n)\log m$ there exists a protocol with $(A_i,B_i,E_A,E_B,D_B)$ such that for any input state $\sigma^{CA}$ with $\dim{A}=\dim{C}=m$, the fidelity $F(\sigma^{CA},\rho^{CB_D})\ge 1-\epsilon$. A rate $R$ is said to be achievable if there is a sequence of $(R,n,\epsilon_n)$ codes with $\epsilon_n\rightarrow 0$. The \textit{quantum communication capacity} of $W$, $Q(W)$, is the supremum of the achievable rates.

\section{Results}

Suppose two agents Alice and Bob can interact with multiple uses of the process $W$ in the above asymptotic setting. How does the lack of a definite causal order affect their quantum communication capacity? We know that if $W=W^{A\succ B}$, Alice cannot communicate any information to Bob, but what about the general case of a causally separable $W=pW^{A\prec B}+(1-p)W^{A\succ B}$, which becomes $W^{A\succ B}$ only at $p=0$? Here we prove that the quantum capacity actually starts to vanish at the much higher value $p=1/2$, which implies that, for any causally separable $W$ of this form, there is quantum capacity in at most one direction.
\begin{theorem}\label{th:ngbwc}
$W^{AB}=pW^{A\prec B}+(1-p)W^{A\succ B}$ can have positive quantum communication capacity 
in the Alice to Bob direction if and only if $p>1/2$. 
\end{theorem}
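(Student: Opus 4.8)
The plan is to prove the two implications separately: that $p>1/2$ permits positive capacity (achievability) and that $p\le 1/2$ forces zero capacity (converse). For achievability I would not argue about general $W^{A\prec B},W^{A\succ B}$ but exhibit a single convenient example. Take $W^{A\prec B}$ to be the identity channel carrying $A_O$ to $B_I$, supplemented by an extra fixed ``flag'' subsystem of $B_I$ prepared in $\ket{1}$, and take $W^{A\succ B}$ to discard $A_O$ and prepare the flag in $\ket{0}$ (its $B_I$ being independent of $A_O$, as required by \eqref{eq:wbta}). Composing with the obvious local operations, the shared channel $N=W*(A\otimes B)$ becomes a quantum erasure channel from Alice to Bob with erasure probability $1-p$: with probability $p$ Bob receives Alice's system together with the flag $\ket{1}$, and with probability $1-p$ he receives an Alice-independent state together with the flag $\ket{0}$ that heralds the erasure. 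Since the quantum capacity of the erasure channel is $\max\{0,\,1-2(1-p)\}=\max\{0,\,2p-1\}$, this process has positive $A\to B$ capacity exactly when $p>1/2$, which settles the ``if'' direction.

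For the converse I would show that whenever $p\le 1/2$ the $A\to B$ shared channel $M$ produced by \emph{any} protocol admits a symmetric extension, and then invoke no-cloning. The structural input is the observation, following from \eqref{eq:wbta}, that the $W^{A\succ B}$ branch transmits nothing from Alice to Bob: Bob's received system is independent of $A_O$. Hence, from the standpoint of $A\to B$ signalling, each copy of $W$ acts as ``forward channel with probability $p$, Alice-independent junk with probability $1-p$''. The key construction introduces a second, fictitious receiver $\tilde B$ together with a refined per-copy classical coin having outcomes $B,\tilde B,\mathrm{none}$ of probabilities $p,p,1-2p$; this is a valid distribution \emph{precisely} because $p\le 1/2$. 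On outcome $B$ the copy realises $W^{A\prec B}$ toward Bob and feeds $\tilde B$ an Alice-independent state matching the $B$-output of $W^{A\succ B}$; outcome $\tilde B$ is the mirror image; outcome $\mathrm{none}$ feeds both receivers Alice-independent junk. Because $W^{A\prec B}$ does not signal back to Alice (by \eqref{eq:watb}), no causal loop is created, so this defines a legitimate channel $\mathcal{M}_{A\to B\tilde B}$ that is symmetric under $B\leftrightarrow\tilde B$, and whose two marginals each reproduce the original $A\to B$ process $pW^{A\prec B}+(1-p)W^{A\succ B}=W$, hence the shared channel $M$. If $Q_{A\to B}(W)$ were positive, Alice could reliably transmit an unknown quantum state through $W^{\otimes n}$; but then both $B$ and $\tilde B$, each seeing the same channel $M$, would recover that state, producing two copies and contradicting no-cloning. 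Therefore $Q_{A\to B}(W)=0$ for $p\le 1/2$, with the boundary $p=1/2$ included since $1-2p=0$ still yields a valid coin.

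The main obstacle I anticipate is making this symmetric extension rigorous within the process-composition rules of \cref{sec:as}: I must check that routing each copy to one of the two receivers yields a bona fide, loop-free CPTP channel whose two marginals are \emph{exactly} the shared channel $M$ of the given protocol---not merely channels with the same signalling capacity---so that the no-cloning step applies verbatim, and that the construction respects the restriction to product local operations $\bigotimes_i(A_i\otimes B_i)$ rather than smuggling in extra entanglement or causal order. Verifying that ``$\tilde B$'s junk'' can always be taken to coincide with the genuine $W^{A\succ B}$ output to $\tilde B$---so that the dropped back-signalling to Alice leaves $\tilde B$'s marginal identically equal to $W$---is the delicate point on which the whole converse rests.
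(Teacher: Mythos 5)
Your proposal is correct, and it splits into two halves that relate differently to the paper's proof. The achievability half is essentially identical: the paper also exhibits $W^{A\prec B}$ as an identity channel on a subspace and $W^{A\succ B}$ as preparing a state on an orthogonal (flag) subspace, obtaining an erasure channel with capacity $\max\{0,2p-1\}$. The converse half takes a genuinely different route. The paper first performs the algebraic reduction \eqref{eq:2nd}--\eqref{eq:aae}, using \eqref{eq:wbta}, \eqref{marginalB} and \eqref{eq:watb} to show that each $W_i*A_i$ collapses to $p\,L_i + (1-p)\,\id^{{A_I^i}'}\otimes\sigma^{B_I^i}$ (with $B_O^i$ discarded, so Bob's encoding can be dropped), then argues this resource is \emph{simulable} by a quantum erasure channel with erasure probability $1-p$ and simply cites the known capacity formula $Q=\max\{0,2p-1\}$ of Bennett et al. You instead inline the proof of that cited fact: you build an explicit symmetric extension with a three-outcome coin $(p,p,1-2p)$, valid exactly when $p\le 1/2$, and invoke no-cloning/monogamy. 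Both are sound; the paper's simulation argument is shorter and outsources the hard step to a known result, while yours is self-contained and makes the mechanism (antidegradability at $p\le 1/2$) explicit. Two small points to tighten in your version: (i) the statement that the $W^{A\succ B}$ branch delivers an Alice-independent state to Bob needs \emph{both} \eqref{eq:wbta} and the marginal condition \eqref{marginalB} (to get that the joint state on ${A_I^i}'B_I^i$ is the product $\id^{{A_I^i}'}\otimes\sigma^{B_I^i}$, not merely $A_O$-independent), which is exactly what the paper's chain \eqref{eq:3rd}--\eqref{eq:4th} establishes; and (ii) you should note, as the paper does via the $\id^{B_O^i}$ factor in \eqref{eq:aae}, that Bob's inputs to the process are traced out, so the per-copy extension composes cleanly with the product-operation restriction of \cref{sec:as} and the marginals reproduce the shared channel $M$ exactly.
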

\begin{proof}
In a general protocol, Alice applies a channel $A_i$ on the $i$-th copy $W_i$ of $W$, and as explained above, ${A_O^i}'$ can be taken to be trivial.
\begin{align}
&W_i*A_i
\\
=&p W^{A\prec B}_i*A_i+(1-p)W^{A\succ B}_i*A_i
\\
=&p W^{A\prec B}_i*A_i+(1-p)[_{A_O^i}W^{A\succ B}_i*A_i]\label{eq:2nd}
\\
=&pW^{A\prec B}_i*A_i+(1-p) \id^{{A_I^i}'}\otimes\Tr_{A_I^iA_O^i}W^{A\succ B}_i\label{eq:3rd}
\\
=&pW^{A\prec B}_i*A_i+(1-p)\id^{{A_I^i}'B_O^i}\otimes \sigma^{B_I^i}\label{eq:4th}
\\
=&[p\Tr_{B_O^i}W^{A\prec B}_i*A_i+(1-p)\id^{{A_I^i}'}\otimes \sigma^{B_I^i}]\otimes \id^{B_O^i},\label{eq:aae}
\end{align}
where is the density operator defined by $\sigma^{B_I^i}:= \Tr_{A_I^iA_O^iB_O^i}W^{A\succ B}$. \Cref{eq:2nd} holds by \cref{eq:wbta}. \Cref{eq:3rd} holds because $\id^{A_O^i}*A_i=\id^{A_I^i{A_I^i}'}$, which is true for any channel $A_i$. \Cref{eq:4th} holds by \cref{marginalB}. \Cref{eq:aae} holds by \cref{eq:watb}. $\id^{B_O^i}$ in (\ref{eq:aae}) means that whatever that is sent into $B_O^i$ is traced out. This implies that $E_B$ and $B_i$ can be omitted. Their non-trivial part is obtained after tracing out $B_O^i$, which can be absorbed into $D_B$.

The communication resource above can be simulated by a quantum erasure channel \footnote{A quantum erasure channel is defined to send an input state $\rho$ to the output state $p\rho+(1-p)\ketbra{e}$, where the erasure flag state $\ket{e}$ is orthogonal to any possible input state \cite{grassl1997codes}. The idea is that when the erasure occurs the receiver can detect it from the flag state in an orthogonal subspace.}. In a communication protocol, the $\Tr_{B_O}W^{A\prec B}_i*A_i\otimes \id^{B_O^i}$ part of \Cref{eq:aae} is equivalent to the channel $L_i:=\Tr_{B_O^i}W^{A\prec B}_i*A_i$, while the $\id^{{A_I^i}'B_O^i}\otimes \sigma^{B_I^i}$ part is equivalent to a fixed state $\sigma^{B_I^i}$. 
$W_i*A_i$ can be simulated by a quantum erasure channel $\rho\rightarrow p\rho+(1-p)\ketbra{e}$: Bob applies the local channel $L_i$ when no erasure occurs, and locally sends $\ketbra{e}$ to $\sigma$ when the erasure occurs. By doing this for all $i$, the whole protocol of communication using $W$ can be simulated by one using the erasure channel. Consequently the quantum capacity of $W^{AB}$ is upper-bounded by that of the quantum erasure channel, which is known to be $Q=\max\{0,2p-1\}$ \cite{bennett1997capacities}. Therefore the capacity of the process to communicate from $A$ to $B$ can be positive only if $p>1/2$. 

To see that when $p>1/2$ there can indeed be positive capacity, simply let $W^{A\prec B}$ describe the identity channel on a subspace, and let $W^{A\succ B}$ induce a state $\sigma^{B_I}$ on the orthogonal subspace. Then \Cref{eq:aae} is effectively a quantum erasure channel, which has capacity $Q=\max\{0,2p-1\}$ that is positive for $p>1/2$.
\end{proof}

\begin{corollary}
$W^{AB}=pW^{A\prec B}+(1-p)W^{A\succ B}$ can have positive quantum communication capacity in at most one direction (either Alice to Bob or Bob to Alice).
\end{corollary}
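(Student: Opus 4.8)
The plan is to obtain the corollary directly from \Cref{th:ngbwc} by exploiting the symmetry of the asymptotic setting under exchange of the two agents. \Cref{th:ngbwc} already establishes that the Alice-to-Bob quantum capacity of $W^{AB}=pW^{A\prec B}+(1-p)W^{A\succ B}$ is positive if and only if $p>1/2$. What remains is to characterise the Bob-to-Alice capacity and to verify that the two positivity conditions are mutually exclusive.

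First I would note that relabelling the two agents, $A\leftrightarrow B$, interchanges the two causal orders: by the definitions \eqref{eq:watb}--\eqref{eq:wbta}, the order permitting Alice-to-Bob signalling is mapped to the one permitting Bob-to-Alice signalling, and conversely. Crucially, the asymptotic setting of \Cref{sec:as} is itself invariant under this exchange: the agents apply product channels $A_i\otimes B_i$, employ separate encodings $E_A,E_B$ and decodings $D_A,D_B$, and no joint operation is allowed to couple the two sides. Consequently the Bob-to-Alice communication task for $W^{AB}=pW^{A\prec B}+(1-p)W^{A\succ B}$ is identical to the Alice-to-Bob task for a process of exactly the same form, but with the weights of the two orders interchanged, i.e.\ with $p$ replaced by $1-p$.

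Applying \Cref{th:ngbwc} to this relabelled process then yields that the Bob-to-Alice capacity is positive if and only if $1-p>1/2$, that is, $p<1/2$. The conclusion follows by observing that $p>1/2$ and $p<1/2$ cannot hold simultaneously; moreover, at the balanced point $p=1/2$ neither condition is met and both directions have vanishing capacity. Hence at most one of the two directions can support positive quantum communication capacity, which is precisely the claim.

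I expect the only delicate point to be the justification of the symmetry step, namely that the exchange $A\leftrightarrow B$ maps the class of admissible protocols bijectively onto itself, so that the optimal Bob-to-Alice rate for $W$ coincides with the optimal Alice-to-Bob rate for the relabelled process. Since the asymptotic setting explicitly forbids any joint encoding on ${A^{\otimes n}_I}'\otimes {B^{\otimes n}_I}'$ or joint decoding on ${A^{\otimes n}_O}'\otimes {B^{\otimes n}_O}'$, the setup treats the two agents on an equal footing and this bijection is manifest; no further quantitative estimate is required.
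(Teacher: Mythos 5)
Your proposal is correct and follows essentially the same route as the paper: apply \Cref{th:ngbwc} to both directions (the Bob-to-Alice case via the $A\leftrightarrow B$ relabelling, giving the condition $1-p>1/2$) and observe that $p>1/2$ and $1-p>1/2$ cannot hold simultaneously. The paper's own proof is just a terser version of this, leaving the symmetry step implicit, which you have correctly identified and justified.
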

\begin{proof}
By the previous theorem, to have positive capacity in 
\ablue{either} direction $p$ or $1-p$ has to be greater than $1/2$. Yet this can only hold for at \ablue{most} one of them.
\end{proof}

\begin{corollary}
$W^{AB}=\frac{1}{2}W^{A\prec B}+\frac{1}{2}W^{A\succ B}$ has no quantum communication capacity in either direction.
\end{corollary}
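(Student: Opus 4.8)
The plan is to derive this statement directly from Theorem~\ref{th:ngbwc} by invoking it once for each communication direction. The theorem already establishes that the Alice-to-Bob quantum capacity of $W^{AB}=pW^{A\prec B}+(1-p)W^{A\succ B}$ is positive if and only if $p>1/2$. Specialising to $p=1/2$ immediately kills the Alice-to-Bob direction, so the only real work is handling the reverse direction.

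For the Bob-to-Alice direction, the idea is to exploit the symmetry of the setup under exchanging the roles of the two agents. The asymptotic setting of \Cref{sec:as} treats Alice and Bob interchangeably---each party both sends and receives, and the product-channel constraint is symmetric---so Theorem~\ref{th:ngbwc}, proved for signalling from $A$ to $B$, applies verbatim after swapping the labels $A\leftrightarrow B$. Concretely, one rewrites $W^{AB}=(1-p)W^{A\succ B}+pW^{A\prec B}$ and reads the theorem for the $B$-to-$A$ direction with the weight $1-p$ now playing the role of the relevant causal-order probability; the conditions \eqref{eq:watb}--\eqref{eq:wbta} that define the causally-ordered components are themselves symmetric under the exchange. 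One then concludes that the Bob-to-Alice capacity is positive if and only if $1-p>1/2$, i.e.\ if and only if $p<1/2$.

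Finally I would combine the two facts at $p=\tfrac12$: positive Alice-to-Bob capacity requires $p>1/2$, and positive Bob-to-Alice capacity requires $p<1/2$, and neither strict inequality can hold at $p=1/2$. Hence both capacities vanish, which is the claim. I do not expect any genuine obstacle, as the corollary is a direct specialisation of the theorem; the single point deserving a line of justification is the symmetry step---that the theorem carries over to the reverse direction under relabelling---which follows from the symmetric form of the process constraints and of the asymptotic protocol rather than from any new computation.
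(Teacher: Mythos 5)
Your proposal is correct and matches the paper's argument: the corollary is obtained as a direct specialisation of Theorem~\ref{th:ngbwc} to $p=1/2$, applied once per direction with the roles of Alice and Bob exchanged (the same symmetry the paper uses implicitly in the preceding corollary). Your extra remark justifying the relabelling step is a fair elaboration but not a different route.
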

This is a simple consequence of the previous theorem. When the uncertainty in the causal order is maximal ($p=1-p=1/2$), there is no quantum communication capacity in either direction.

\section{Discussion}

We have seen that some bias in the causal order is necessary to have any quantum communication, with the consequence that quantum information can only be exchanged in one direction when the same system is used for read-out and encoding. Interestingly, this is not the case for \emph{classical} communication: agents can transmit perfect classical bits asymptotically, as long as in each run there is a non-zero probability of having a channel in the right direction.

The impossibility to communicate  quantum information bidirectionally does not contradict recent results proving two-way quantum communication with a single particle, exchanged in a superposition of directions \cite{delsanto2018, massa2018experimental}. Indeed, in those scenarios each agent performs a preparation first and a measurement afterwards, with each agent's measurement always after the other's preparation. This corresponds to a four-partite process with fixed causal order, although with the interesting constraint that only one particle per run is exchanged.

Finally, there are several promising directions to extend the analysis presented here. We have only considered classical uncertainty of causal order, modelled by causally separable processes. It remains to be established whether processes with \emph{indefinite} causal order \cite{oreshkov2012quantum, chiribella2013quantum} can outperform separable ones in this respect, for example if they permit bidirectional quantum communication. Since indefinite causal order can provide advantages in certain communication tasks \cite{feix2015quantum, guerin2016exponential, ebler2018enhanced}, it is an interesting open question whether it also constitutes a quantum communication resource in the asymptotic scenario treated here, in particular in view of the recent experimental interest \cite{procopio2015experimental,  rubino2017experimental, RubinoExperimentalOrders, Goswami2018, goswami2018communicating, wei2018experimental, guo2018experimental}. Furthermore, there are other communication settings such as different asymptotic settings or the one-shot setting where the results in this work do not apply. These and the general topic of quantifying causal order as a resource for communication are left for further investigations.

\section*{Acknowledgement}
We thank Alexander Smith for posing the question of the resource theory of causal structure to us, which triggered this work. 
We thank the Spacetime and Information Workshop at Manitoulin Island organized and hosted by Natacha Altamirano, Fil Simovic, Alexander Smith, and Robert Mann, as well as the Mitacs Globalink program for fostering this research. DJ is grateful to Lucien Hardy and Achim Kempf for guidance and support.
Research at Perimeter Institute is supported by the Government of Canada through the Department of Innovation, Science and Economic Development Canada and by the Province of Ontario through the Ministry of Research, Innovation and Science. This work was supported by the Australian Research Council (ARC) Centre of Excellence for Quantum Engineered Systems grant (CE 110001013). This publication was made possible through the support of a grant from the John Templeton Foundation. The opinions expressed in this publication are those of the authors and do not necessarily reflect the views of the John Templeton Foundation. We acknowledge the traditional owners of the land on which the University of Queensland is situated, the Turrbal and Jagera people.


\bibliography{mendeley.bib}
\end{document}